\newcommand{\reals}{\ensuremath{\mathbb{R}}}
\newcommand{\mst}{\operatorname{MST}}
\newcommand{\opt}{\operatorname{OPT}}
\newcommand{\pr}{\operatorname{Pr}}
\newcommand{\dist}{\operatorname{dist}}
\newcommand{\inter}{\operatorname{inter}}
\newcommand{\adj}{\operatorname{Adj}}
\newcommand{\tp}{\hspace{10pt}}
\DeclareMathOperator*{\argmin}{arg\,min}
\DeclareMathOperator*{\dif}{dif}
\newtheorem{theorem}{Theorem}
\newtheorem{lemma}[theorem]{Lemma}
\newtheorem{cor}[theorem]{Corollary}
\newtheorem{prop}[theorem]{Proposition}
\newtheorem{definition}{Definition}
\newenvironment{proof}{{\noindent\bf Proof.}}{$\Box$\newline}
\begin{document}

\title{Bounding Interference in Wireless Ad Hoc Networks
with Nodes in Random 
Position\thanks{This work was supported in part by the Natural Sciences
and Engineering Research Council of Canada (NSERC).}}

\author{
Majid Khabbazian\thanks{%
University of Winnipeg, Winnipeg, Canada, {\tt m.khabbazian@uwinnipeg.ca}}
\and Stephane Durocher\thanks{%
University of Manitoba, Winnipeg, Canada, {\tt durocher@cs.umanitoba.ca}}
\and Alireza Haghnegahdar\thanks{%
University of British Columbia, Vancouver, Canada,
{\tt alirezah@ece.ubc.ca}}
}

\date{October 28, 2011}

\maketitle

\begin{abstract}
The interference at a wireless node $s$ can be modelled
by the number of wireless nodes whose transmission ranges cover $s$. 
Given a set of positions for wireless nodes,
the interference minimization problem is to assign a
transmission radius (equivalently, a power level)
to each node such that the resulting communication graph
is connected, while minimizing the maximum interference.
We consider the model introduced by von Rickenback et al.\ (2005),
in which each transmission range is represented by a ball
and edges in the communication graph are symmetric.
The problem is NP-complete in two dimensions (Buchin 2008) and
no polynomial-time approximation algorithm is known.
Furthermore, even in one dimension (the highway model), 
the problem's complexity is unknown 
and the maximum interference of a set of $n$ wireless nodes
can be as high as $\Theta(\sqrt{n})$ (von Rickenback et al.\ 2005).
In this paper we show how to solve the problem efficiently in 
settings typical for wireless ad hoc networks.
In particular, 
we show that if node positions are 
represented by a set $P$ of $n$ points selected uniformly and independently at
random over a $d$-dimensional rectangular region, for any fixed $d$, 
then the topology given by the closure of the Euclidean minimum spanning 
tree of $P$ has maximum interference $O(\log n)$ with high probability.
We extend this bound to a general class of communication graphs
over a broad set of probability distributions.
Next we present a local algorithm that constructs a graph from this class;
this is the first local algorithm to provide an upper bound
on the expected maximum interference.
Finally, we discuss an empirical evaluation of our algorithm with
a suite of simulation results.

{\bf keywords:} interference, topology control, minimum spanning tree,
random distribution, expectation
\end{abstract}

%%% SECTION - INTRODUCTION
\section{Introduction}\label{sec:introduction}
%%%%%%%%%%%%%%%%%%%%%%%%%%%%
% Bounding the Maximum Interference of Random Sensors in the Plane
% SECTION - INTRODUCTION	
%
% file last modified on November 14, 2011
%%%%%%%%%%%%%%%%%%%%%%%%%%%%

%%% SUBSECTION
\subsection{Motivation}
\label{sec:introduction.motivation}
Establishing connectivity in a wireless network can be a complex task
for which various (sometimes conflicting) objectives may need to be optimized. 
To permit a packet to be routed from any origin node to any destination
node in the network, the corresponding communication graph must be connected 
(or strongly connected if unidirectional communication links are permitted).
In addition to requiring connectivity, various properties can be imposed
on the network,
including low power consumption \cite{wattenhoferTON2005,santi2005}, 
bounded average traffic load \cite{durocherJOIN2009,hyytia2006}, 
small average hop distance between sender-receiver pairs
\cite{benkert2008}, 
low dilation ($t$-spanner) 
\cite{benkert2008,smid2007,burkhart2004,li2002,kanj2008,bose2005,damian2006},
and minimal interference;
this latter objective, minimizing interference
(and, consequently, minimizing the required bandwidth),
is the focus of much recent research 
\cite{wattenhofer2005,wattenhofer2009,wattenhofer2008,wattenhofer2005b,%
halldorsson2008,buchin2008,tan2011,kranakis2010,korman2011,%
kowalski2010,bilo2008,sharma2009,santi2005,benkert2008}
and of this paper.

We adopt the interference model introduced by von Rickenbach et al.\ 
\cite{wattenhofer2005}
(see Section~\ref{sec:introduction.problemDefinition}).
We model transmission in a wireless network by assigning to each wireless 
node $p$ a radius of transmission $r(p)$, 
such that every node within distance $r(p)$ of $p$ can receive a 
transmission from $p$, whereas no node a greater distance from $p$ can.
Consequently, the interference at node $p$ is the number of nodes 
that have $p$ within their respective radii of transmission.
Given a set of wireless nodes whose positions are represented by a set of
points $P$, we consider the problem of identifying a connected
network on $P$ that minimizes the maximum interference.
The problem of constructing the network is equivalent to that of
assigning a transmission radius to each node.
That is, once the transmission radius of each node is fixed, the corresponding
communication graph and its associated maximum interference are also fixed.
Conversely, once a graph is fixed, the transmission radius of each node
is determined by the distance to its furthest neighbour.

Given a set of points $P$ in the plane, finding a connected graph on $P$
that minimizes the maximum interference is NP-complete \cite{buchin2008}.
A polynomial-time algorithm exists that returns a solution with
maximum interference $O(\sqrt{n})$, where $n = |P|$ \cite{halldorsson2008}.
Even in one dimension, for every $n$ there exists a set of $n$ points $P$ 
such thta any graph on $P$ has 
maximum interference $\Omega(\sqrt{n})$ \cite{wattenhofer2005}.
All such known examples involve specific constructions 
(i.e., exponential chains). 
We are interested in investigating a more realistic class of wireless
ad hoc networks:
those whose node positions observe common random distributions
that better model actual wireless ad hoc networks.

When nodes are positioned on a line (often called the {\em highway model}), 
a simple heuristic is to assign to each node a radius of transmission
that corresponds to the maximum of the distances to
its respective nearest neighbours to the left and right.
In the worst case, such a strategy can result in 
$\Theta(n)$ maximum interference
when an optimal solution has only $\Theta(\sqrt{n})$ maximum interference
\cite{wattenhofer2005}.
Recently, Kranakis et al.\ \cite{kranakis2010} 
showed that if $n$ nodes are positioned uniformly at random on an interval,
then the maximum interference provided by this heuristic is 
$\Theta(\sqrt{\log n})$ with high probability.

In this paper, 
we examine the corresponding problem in two and higher dimensions.
We generalize the nearest-neighbour path used in the highway
model to the Euclidean minimum spanning tree (MST),
and show that with high probability, the maximum interference 
of the MST of a set of $n$ points selected uniformly at random 
over a $d$-dimensional region $[0,1]^d$
is $O(\log n)$, for any fixed $d \geq 1$.
Our techniques differ significantly from those used 
by Kranakis et al.\ to achieve their results in one dimension.
As we show in Section~\ref{sec:bounds},
our results also apply to a broad class of random distributions,
denoted $\mathcal{D}$, that includes both the uniform random distribution and
realistic distributions for modelling 
random motion in mobile wireless networks,
as well as to a large class of connected spanning graphs that includes the MST.

In Section~\ref{sec:localAlg} we present 
a local algorithm that constructs a topology whose maximum interference
is $O(\log n)$ with high probability when node positions are selected
according to a distribution in $\mathcal{D}$.
Previous local algorithms for topology control 
(e.g., the cone-based local algorithm (CBTC) \cite{wattenhoferTON2005})
attempt to reduce transmission radii (i.e., power consumption), 
but not necessarily the maximum
interference. Although reducing transmission radii at many nodes
is often necessary to reduce the maximum interference, 
the two objectives differ;
specifically, some nodes may require large transmission
radii to minimize the maximum interference.
Ours is the first local algorithm to provide a non-trivial upper bound
on maximum interference.
Our algorithm can be applied to any existing topology to refine it
and further reduce its maximum interference. Consequently, our solution
can be used either independently, or paired with another topology control
strategy.
Finally, we discuss an empirical evaluation of our algorithm with
a suite of simulation results in Section~\ref{sec:simulation}.

%%% SUBSECTION
%\subsection{Problem Statement, Model, and Definitions}
\subsection{Model and Definitions}
\label{sec:introduction.problemDefinition}

We represent the position of a wireless node as a point in Euclidean space,
$\reals^d$, for some fixed\footnote{In the majority of instances, 
two or three dimensions suffice to model an actual wireless 
network. Our results are presented in terms of an arbitrary $d$
since this permits expressing a more general result
without increasing the complexity of the corresponding notation.}
$d \geq 1$.
For simplicity, we refer to each node by its corresponding point.
Similarly, we represent a wireless network by its communication graph,
a geometric graph whose vertices are a set of points $P \subseteq \reals^d$.
Given a (simple and undirected) graph $G$,
we employ standard graph-theoretic notation, 
where $V(G)$ denotes the vertex set of $G$ and $E(G)$ 
denotes
its edge set.
We say vertices $u$ and $v$ are {\em $k$-hop neighbours}
if there is a simple path of length $k$ from $u$ to $v$ in $G$.
When $k=1$ we say $u$ and $v$ are neighbours.

We assume a uniform range of communication for each node
and consider bidirectional communication links, 
each of which is represented by an undirected graph edge connecting two nodes. 
Specifically, each node $p$ has some {\em radius of transmission},
denoted by the function $r : P \to \reals^+$,
such that a node $q$ receives a transmission from $p$ 
if and only if $\dist(p,q) \leq r(p)$, where $\dist(p,q) = ||p-q||_2$ 
denotes the Euclidean distance between points $p$ and $q$ in $\reals^d$.
For simplicity, suppose each node has an infinite radius of reception, 
regardless of its radius of transmission.

\begin{definition}[Communication Graph]
\label{def:communicationGraph}
A graph $G$ is a {\em communication graph} 
with respect to a point set $P \subseteq \reals^d$ and 
a function $r : P \to \reals^+$ if
\begin{enumerate}
\item $V(G) = P$, and
\item for all vertices $p$ and $q$ in $V(G)$,
\begin{equation}
\label{eqn:communicationGraphEdge}
\hspace{-2pt}
\{p,q\} \in E(G) \Leftrightarrow \dist(p,q) \leq \min\{ r(p), r(q)\} .
\end{equation}
\end{enumerate}
\end{definition}

Together, set $P$ and function $r$ uniquely determine 
the corresponding communication graph $G$.
Alternatively, a communication graph can be defined as the closure
of a given embedded graph.
Specifically, if instead of being given $P$ and $r$, 
we are given an arbitrary graph $H$ embedded in $\reals^d$, 
then the set $P$ is trivially determined by $V(H)$
and a transmission radius for each node $p \in V(H)$ can be assigned
to satisfy \eqref{eqn:communicationGraphEdge} by
\begin{equation}
\label{eqn:transmissionRadiusAlt}
r(p) =  \max_{q \in \adj(p)} \dist(p,q) , 
\end{equation}
where $\adj(p) = \{ q \mid \{q,p\} \in E(H) \}$ 
denotes the set of vertices adjacent to $p$ in $H$.
The communication graph determined by $H$ is 
the unique edge-minimal supergraph of $H$ that satisfies 
Definition~\ref{def:communicationGraph}.
We denote this graph by $H'$ and refer to it as the {\em closure} of graph $H$.
Therefore, a communication graph $G$ can be defined either
as a function of a set of points $P$ 
and an associated mapping of transmission radii $r:P \to \reals^+$,
or as the closure of a given embedded graph $H$ (where $G = H'$).

\begin{definition}[Interference]
\label{def:interference}
Given a communication graph $G$ 
the {\em interference} at node $p$ in $V(G)$ is 
\[ \inter_G(p) = | \{ q \mid q \in V(G) \setminus \{ p \} 
\text{~and~} \dist(q,p) \leq r(q) \} |\]
and the {\em maximum interference} of $G$ is
\[ \inter(G) = \max_{p \in V(G)} \inter_G(p) . \]
\end{definition}
In other words, the interference at node $p$, 
denoted $\inter_G(p)$, is the number of nodes $q$ such that 
node $p$ lies within $q$'s radius of transmission.
This does not imply the existence of the edge $\{p,q\}$
in the corresponding communication graph; 
such an edges exists if and only if the relationship is reciprocal, i.e.,
$q$ also lies with $p$'s radius of transmission.

Given a point set $P$,
let $\mathcal{G}(P)$ denote the set of connected communication graphs on $P$.
Let $\opt(P)$ denote the optimal maximum interference attainable over graphs 
in $\mathcal{G}(P)$. That is,
\[ \opt(P) = \min_{G \in \mathcal{G}(P)} \inter(G) 
= \min_{G \in \mathcal{G}(P)} \max_{p \in V(G)} \inter_G(p) . \]

Thus, given a set of points $P$ representing the positions of wireless nodes,
the {\em interference minimization problem} is to find
a connected communication graph $G$ on $P$ that spans $P$ 
such that the maximum interference is minimized (i.e., its maximum interference
is $\opt(P)$).
In this paper we examine the maximum interference of the 
communication graph determined by the closure of $\mst(P)$, where 
$\mst(P)$ denotes the Euclidean minimum spanning tree of the point set $P$.
Our results apply with high probability, 
which refers to probability at least $1 - n^{-c}$,
where $n = |P|$ denotes the number of networks nodes
and $c \geq 1$ is fixed.

%%% SUBSECTION
%\subsection{Overview of Results}

%%% SECTION - RELATED WORK
\section{Related Work}\label{sec:relatedWork}
%%%%%%%%%%%%%%%%%%%%%%%%%%%%
% Bounding the Maximum Interference of Random Sensors in the Plane
% SECTION - RELATED WORK
%
% file last modified on July 29, 2011
%%%%%%%%%%%%%%%%%%%%%%%%%%%%

%%% SUBSECTION
\subsection{Bidirectional Interference Model}
In this paper we consider the bidirectional interference model
(defined in Section~\ref{sec:introduction.problemDefinition}).
This model was introduced by 
von Rickenback et al.\ \cite{wattenhofer2005},
who gave a polynomial-time approximation algorithm that finds a solution
with maximum interference $O(n^{1/4} \cdot \opt(P))$
for any given set of points $P$ on a line,
and a one-dimensional construction showing that $\opt(P) \in \Omega(\sqrt{n})$ 
in the worst case, where $n = |P|$.
Halld\'orsson and Tokuyama \cite{halldorsson2008}
gave a polynomial-time algorithm that returns a solution
with maximum interference $O(\sqrt{n})$ 
for any given set of $n$ points in the plane.
Buchin \cite{buchin2008} showed that finding an optimal solution 
(one whose maximum interference is exactly $\opt(P)$) is NP-com\-plete 
in the plane.
Tan et al.\ \cite{tan2011} gave an $O(n^3 n^{O(\opt(P))})$-time algorithm for 
finding an optimal solution for any given set of points $P$ on a line.
Kranakis et al.\ \cite{kranakis2010}
showed that for any set of
points $P$ selected uniformly at random from the unit interval,
the maximum interference of the nearest-neighbour path (MST) has maximum 
interference $\Theta(\sqrt{\log n})$ with high probability.
Finally, Sharma et al.\ \cite{sharma2009} consider heuristic solutions
to the two-dimensional problem.

%%% SUBSECTION
\subsection{Unidirectional Interference Model}
If communication links are not bidirectional (i.e., edges are directed)
and the communication graph is required to be strongly connected, 
then the worst-case maximum interference decreases.
Under this model, von Rickenback et al.\ \cite{wattenhofer2009} 
and Korman \cite{korman2011}
give polynomial-time algorithms that return solutions
with maximum interference $O(\log n)$
for any given set of points in the plane,
and a one-dimensional construction showing that in the worst case
$\opt(P) \in \Omega(\log n)$.

%%% SUBSECTION
\subsection{Minimizing Average Interference}
In addition to results that examine the problem of minimizing 
the maximum interference, some work has addressed the problem
of minimizing the average interference,
e.g., Tan et al.\ \cite{tan2011}
and Moscibroda and Wattenhofer \cite{wattenhofer2005b}.

%%% SECTION - BOUNDS
\section{Bounds}\label{sec:bounds}
%%%%%%%%%%%%%%%%%%%%%%%%%%%%
% Bounding the Maximum Interference of Random Sensors in the Plane
% SECTION - BOUNDS
%
% file last modified on November 2, 2011
%%%%%%%%%%%%%%%%%%%%%%%%%%%%

%%% SUBSECTION
\subsection{Generalizing One-Dimensional Solutions}
Before presenting our results on random sets of points,
we begin with a brief discussion regarding the possibility
of generalizing existing algorithms that provide approximate solutions
for one-dimensional instances of the interference minimization
problem (in an adversarial deterministic input setting).

Since the problem of identifying a graph that achieves the optimal
(minimum) interference is NP-hard in two or more dimensions
\cite{buchin2008},
it is natural to ask whether one can design a polynomial-time algorithm
to return a good approximate solution.
Although Rickenback et al.\ \cite{wattenhofer2005}
give a $\Theta(n^{1/4})$-approximate algorithm
in one dimension \cite{wattenhofer2005},
the current best polynomial-time algorithm in two (or more) dimensions
by Halld\'orsson and Tokuyama \cite{halldorsson2008}
returns a solution whose maximum interference is $O(\sqrt{n})$;
as noted by Halld\'orsson and Tokuyama,
this algorithm is not known to guarantee any approximation
factor better than the immediate bound of $O(\sqrt{n})$.
The algorithm of Rickenback et al.\ uses two strategies for constructing
respective communication graphs, and returns the graph with the lower
maximum interference; an elegant argument
that depends on Lemma~\ref{lem:wattenhofer2005} bounds the resulting
worst-case maximum interference by $\Theta(n^{1/4} \cdot \opt(P))$.
The two strategies correspond roughly to a) $\mst(P)'$ and
b) classifying every $\sqrt{n}$th node as a hub,
joining each hub to its left and right neighbouring hubs
to form a network backbone,
and connecting each remaining node to its closest hub.
The algorithm of Halld\'orsson and Tokuyama applies $\epsilon$-nets, resulting
in a strategy that is loosely analogous to a generalization
of the hub strategy of Rickenback et al.\ to higher dimensions.
One might wonder whether the hybrid approach of Rickenback et al.\
might be applicable in higher dimensions.
Specifically, can a good approximation factor be guaranteed
by returning the better of the respective graphs returned
by the $\epsilon$-net algorithm of Halld\'orsson and Tokuyama
and the communication graph determined by $\mst(P)'$?
To apply this idea directly in two or more dimensions
would require generalizing the following property established by
von Rickenback et al.:
\begin{lemma}[von Rickenback et al.\ \cite{wattenhofer2005} (2005)]
\label{lem:wattenhofer2005}
For any set of points $P \subseteq \reals$,
\[ \opt(P) \in \Omega\left(\sqrt{\inter(\mst(P)')}\right) . \]
\end{lemma}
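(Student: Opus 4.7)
The plan is to reduce to the $\Omega(\sqrt{n})$ lower-bound construction of von Rickenbach et al.\ \cite{wattenhofer2005}: first extract an exponentially-spaced chain of $\Omega(k)$ points from the assumption $\inter(\mst(P)') = k$, then invoke their lower bound on that chain.

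To extract the chain, let $p \in P$ realize $\inter_{\mst(P)'}(p) = k$. By pigeonhole, at least $m := \lceil k/2 \rceil$ of the nodes whose MST transmission radius covers $p$ lie on the same side; WLOG call them $q_1 < q_2 < \cdots < q_m < p$. In one dimension $\mst(P)$ is the path through the sorted points of $P$, so by \eqref{eqn:transmissionRadiusAlt} each $r_{\mst}(q_i)$ equals the larger of the distances from $q_i$ to its two $P$-nearest neighbours. The right neighbour of $q_i$ sits in $(q_i, p]$ at distance at most $p - q_i$, so the condition $r_{\mst}(q_i) \geq p - q_i$ forces the left neighbour to lie at distance $\geq p - q_i$; equivalently, $P \cap (2q_i - p,\, q_i) = \emptyset$. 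Since $q_{i-1} \in P$ sits strictly left of $q_i$, this gives $q_{i-1} \leq 2q_i - p$; writing $d_i := p - q_i$, we obtain $d_{i-1} \geq 2 d_i$, so $\{q_1, \ldots, q_m, p\}$ forms an exponential chain of length $m$ whose distances to $p$ at least double as the index decreases.

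The remaining task is to invoke the $\Omega(\sqrt{m})$ lower bound of von Rickenbach et al.\ for such chains: any connected communication graph on $\{q_1, \ldots, q_m, p\}$, and hence on $P$, has maximum interference $\Omega(\sqrt{m})$. Since $m \geq k/2$, this yields $\opt(P) \in \Omega(\sqrt{k})$. The main obstacle is proving this sub-bound. The standard approach fixes a simple $q_1$-to-$p$ path $\pi$ in an optimal $G$, partitions the chain into $\lceil \sqrt{m}\, \rceil$ contiguous blocks of about $\sqrt{m}$ nodes each, and argues that either the $\gtrsim \sqrt{m}$ inter-block jumps of $\pi$ concentrate at some endpoint (giving it interference $\sqrt{m}$), or some block-crossing edge of $\pi$ is long enough that, by the doubling property $d_{i-1} \geq 2d_i$, its endpoint's transmission disc covers $\sqrt{m}$ chain nodes, again yielding interference $\sqrt{m}$ at that endpoint. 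The delicate part is that $\pi$ need not be monotone in position, may pass through points of $P$ outside the chain, and a single vertex can serve as the endpoint of many crossings, so a careful double-counting is needed to charge each unit of interference to at most one crossing.
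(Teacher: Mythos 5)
The paper itself does not prove Lemma~\ref{lem:wattenhofer2005}: it imports the statement from von Rickenbach et al.\ \cite{wattenhofer2005} and uses it as a black box, so your attempt has to be judged on its own terms rather than against an in-paper argument. Your first stage is sound and is indeed the standard opening move: for $i<m$ the right sorted-neighbour of $q_i$ lies strictly between $q_i$ and $p$, so $r_{\mst}(q_i)\geq p-q_i$ really does force the interval $(2q_i-p,\,q_i)$ to be empty of points of $P$, whence $d_{i-1}\geq 2d_i$. (Only $q_m$ can escape this, namely when $p$ itself is its right neighbour; discarding that one point is harmless.)

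The second stage contains the real content of the lemma, and there the proposal has genuine gaps. First, ``any connected communication graph on $\{q_1,\ldots,q_m,p\}$, and hence on $P$'' is not a legitimate inference: lower bounds on $\opt$ are not monotone under adding points, which is precisely why this lemma fails in $\reals^2$ (the paper notes immediately afterwards that there $\opt(P)\in O(1)$ while $\inter(\mst(P)')\in\Theta(n)$). The bound must be proved for connected graphs on all of $P$, and the only leverage over the extra points is the emptiness of the intervals $(q_i-d_i,\,q_i)$, which your sketch never uses. Second, the proposed dichotomy does not close. Its second horn reverses the direction of interference: a node whose transmission disc covers $\sqrt{m}$ chain nodes contributes one unit of interference to \emph{each of them}; it does not itself suffer interference $\sqrt{m}$. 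Its first horn fails because a single edge from $q_1$'s side to $p$'s side realizes all the block crossings of $\pi$ at once, so a $q_1$--$p$ path can cross every block while certifying only $O(1)$ interference anywhere; the double-counting you flag is a symptom of relying on one path. The missing ingredients are: (i) any edge $\{a,b\}$ with $a\leq q_i-d_i$ and $b\geq q_i$ has $r(b)\geq b-a\geq(b-q_i)+d_i\geq\dist(b,p)$, so the right endpoint of \emph{every} gap-crossing edge covers $p$, and such endpoints number at most $\opt(P)$; (ii) the empty intervals split $P$ into $\Omega(m)$ nonempty groups (each $q_j$ in its own group), and connectivity of $G$ forces $\Omega(m)$ \emph{distinct} gap-crossing edges, all incident to nodes covering $p$; (iii) every $G$-neighbour of a node covers that node, so each node covering $p$ has degree at most $\opt(P)$. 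Combining these gives $\Omega(m)\leq\opt(P)^2$, i.e.\ $\opt(P)\in\Omega(\sqrt{k})$. As written, your proof stops exactly at this central step.
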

However, von Rickenback et al.\ also show that 
for any $n$, there exists a set of $n$ points $P \subseteq \reals^2$
such that $\opt(P) \in O(1)$ and $\inter(\mst(P)') \in \Theta(n)$,
which implies that Lemma~\ref{lem:wattenhofer2005}
does not hold in higher dimensions.
Consequently, techniques such as those used by von Rickenback et al.\
to bound the approximation factor of their algorithm in one dimension 
do not immediately generalize to higher dimensions.

%%% SUBSECTION
\subsection{Randomized Point Sets}
Although using the hybrid approach
of von Rickenback et al.\ \cite{wattenhofer2005}
directly may not be possible,
Kranakis et al.\ \cite{kranakis2010} recently
showed that if a set $P$ of $n$ points is selected uniformly at random from
an interval, then the maximum interference of the communication graph
determined by $\mst(P)'$ is $\Theta(\sqrt{\log n})$ with high probability.
Throughout this section, we assume general position of points;
specifically, we assume
that the distance between each pair of nodes is unique.
This can be expressed formally as
$\forall \{p_1,p_2,q_1,q_2\} \subseteq P$,
\[
  \dist(p_1,q_1)=\dist(p_2,q_2) \Leftrightarrow \{p_1,q_1\}=\{p_2,q_2\}.
\]
We begin by introducing the following definitions:

\begin{definition}[Primitive Edge]
\label{def:primitiveEdge}
An edge $\{p,q\}\in E(G)$ in a communication graph $G$ is {\em primitive} 
if $\min\{r(p),$ $r(q) \} = \dist(p,q)$.
\end{definition}

\begin{definition}[Bridge]
\label{def:bridge}
An edge $\{p,q\}\in E(G)$ in a communication graph $G$ is {\em bridged}
if there is a path joining $p$ and $q$ in $G$ consisting of at most three edges,
each of which is of length less than $\dist(p,q)$.
\end{definition}

\begin{definition}[$\mathcal{T}(P)$]
\label{def:T}
Given a set of points $P$ in $\reals^d$,
$\mathcal{T}(P)$ is the set of all communication graphs $G$ with $V(G)=P$
such that no primitive edge $\{p,q\}\in E(G)$ is bridged.
\end{definition}

Let $\mathcal{C}(R,r,d)$ be the minimum number of $d$-dimensional
balls of radius $r$ required to cover a $d$-dimensional ball of radius $R$.
The following property follows
since $\reals^d$ is a doubling metric space for any constant $d$
\cite{heinonen2001}
(equivalently, $\reals^d$
and has constant doubling dimension \cite{gupta2003,fraigniaud2006}):

\begin{prop}
\label{prop:doubling}
If $d\in \Theta(1)$ and $R/r\in \Theta(1)$,
then $\mathcal{C}(R,r,d) \in \Theta(1)$.
\end{prop}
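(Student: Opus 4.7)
The plan is to establish the bound in both directions. The lower bound $\mathcal{C}(R,r,d)\geq 1$ is immediate, since covering a nonempty ball requires at least one ball. The work lies in the upper bound, for which I would use the standard packing-covering duality together with a volume comparison in $\reals^d$.

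First I would take a maximal $r$-packing $S\subseteq B(0,R)$, i.e.\ a maximal set of points whose pairwise distances all exceed $r$. By maximality, every point of $B(0,R)$ lies within distance $r$ of some point of $S$; otherwise $S$ could be extended. Hence the closed balls of radius $r$ centered at the points of $S$ cover $B(0,R)$, so $\mathcal{C}(R,r,d)\leq |S|$.

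Next I would bound $|S|$ by a volume argument. The open balls of radius $r/2$ centered at the points of $S$ are pairwise disjoint, because any two centers of $S$ are at distance greater than $r$. Each such small ball is contained in $B(0,R+r/2)$, since its center lies in $B(0,R)$. Comparing Lebesgue measures and cancelling the dimension-dependent constant $\pi^{d/2}/\Gamma(d/2+1)$ yields
\[
  |S|\cdot (r/2)^d \;\leq\; (R+r/2)^d,
\]
so $|S|\leq (1+2R/r)^d$. Under the hypotheses $d\in\Theta(1)$ and $R/r\in\Theta(1)$, the right-hand side is $\Theta(1)$, establishing the upper bound. Combined with the trivial lower bound, this gives $\mathcal{C}(R,r,d)\in\Theta(1)$.

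There is essentially no technical obstacle here; the proof is a textbook doubling argument. The only subtlety worth flagging is the slight enlargement from $B(0,R)$ to $B(0,R+r/2)$, which accommodates the small packing balls that may protrude past the boundary and is what keeps the volume comparison clean. An alternative would be to invoke the doubling property of $\reals^d$ directly and iterate $\lceil \log_2(R/r)\rceil$ times, but the packing argument above is more self-contained and yields the explicit bound $(1+2R/r)^d$.
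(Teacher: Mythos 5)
Your argument is correct. One point of comparison: the paper does not actually prove Proposition~\ref{prop:doubling} at all --- it simply observes that the statement ``follows since $\reals^d$ is a doubling metric space for any constant $d$'' and cites Heinonen and Gupta et al.\ for that fact. Your packing-and-volume argument is, in effect, a self-contained proof of the very doubling property the paper takes as a black box: you take a maximal $r$-separated set $S$ in $B(0,R)$, note that maximality forces the $r$-balls around $S$ to cover $B(0,R)$, and then bound $|S|$ by comparing the volume of the disjoint $r/2$-balls (contained in $B(0,R+r/2)$) against the volume of the enlarged ball, giving the explicit estimate $\mathcal{C}(R,r,d) \leq (1+2R/r)^d$, which is $\Theta(1)$ under the stated hypotheses. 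The steps all check out: the separation by more than $r$ makes the half-radius balls disjoint, the enlargement to radius $R+r/2$ correctly accommodates protruding balls, and the dimension-dependent constant in the volume of a Euclidean ball cancels. What your route buys over the paper's citation is an elementary, quantitative bound with no external dependencies; what the citation buys is brevity and the observation that only the doubling property of the ambient metric space is needed, so the result would transfer to any doubling metric, not just $\reals^d$. Either is acceptable here, since the proposition is used only with $R/r = 8$ or $16$ and constant $d$.
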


We now bound the maximum interference of any graph in $\mathcal{T}(P)$.

\begin{theorem}
\label{thm:log}
  Let $P$ be a set of points in $\reals^d$.
  For any graph $G\in\mathcal{T}(P)$,
  \[
    \inter(G) \in O \left( \log \left(
    \frac{d_{\max}(G)}{d_{\min}(G)} \right) \right) ,
  \]
where $d_{\max}(G) =\max_{\{s,t\}\in E(G)} \dist(s,t)$
and $d_{\min}(G) = \min_{\{s,t\}\in E(G)} \dist(s,t)$.
\end{theorem}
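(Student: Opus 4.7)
The plan is to fix an arbitrary vertex $s\in V(G)$ and bound $\inter_G(s)$ uniformly. For every interferer $q$ at $s$, the edge $\{q,t_q\}$ to $q$'s furthest neighbour $t_q$ is primitive (it saturates $\min\{r(q),r(t_q)\}=\dist(q,t_q)$) and has length $r(q)\in[d_{\min}(G),d_{\max}(G)]$; by the hypothesis $G\in\mathcal{T}(P)$ this edge is not bridged, i.e., there is no path from $q$ to $t_q$ in $G$ using at most three edges each strictly shorter than $r(q)$. I would partition the interferers into $O(\log(d_{\max}(G)/d_{\min}(G)))$ classes, each of size $O(1)$.

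I would build the partition by three successive refinements. First, a dyadic bucket on $r(q)$ with ratio $1+\epsilon$ for a small constant $\epsilon>0$, which produces $O(\log_{1+\epsilon}(d_{\max}(G)/d_{\min}(G)))=O(\log(d_{\max}(G)/d_{\min}(G)))$ non-empty levels; in a level with lower endpoint $\rho$, every $q$ lies in $B(s,2\rho)$ since $\dist(q,s)\le r(q)<(1+\epsilon)\rho$. Second, by Proposition~\ref{prop:doubling} I cover $B(s,2\rho)$ with $O(1)$ balls of radius $\delta\rho$ for a small constant $\delta>0$. Third, I cover $S^{d-1}$ with $O(1)$ spherical caps of angular radius $\theta>0$ and classify each $q$ by the cap containing the unit vector $(t_q-q)/r(q)$. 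The cross-product of these three partitions has $O(1)$ cells per level, so it suffices to show each cell contains at most one interferer.

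Suppose a cell contained two interferers $q_1,q_2$. Then $\dist(q_1,q_2)\le 2\delta\rho$, $|r(q_1)-r(q_2)|\le\epsilon\rho$, and the angle between the directions $(t_i-q_i)/r(q_i)$ is at most $\theta$. Invoking general position, WLOG $r(q_1)>r(q_2)$; I would exhibit the path $q_1\to q_2\to t_2\to t_1$ as a bridge for the primitive edge $\{q_1,t_1\}$ of length $r(q_1)$. The first edge $\{q_1,q_2\}$ lies in $E(G)$ because $\dist(q_1,q_2)\le 2\delta\rho<\rho\le\min\{r(q_1),r(q_2)\}$, and is shorter than $r(q_1)$. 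The second edge $\{q_2,t_2\}$ is the primitive edge of $q_2$ and has length $r(q_2)<r(q_1)$. For the third edge, the triangle inequality yields
\[
\dist(t_1,t_2)\le\dist(q_1,q_2)+|r(q_1)-r(q_2)|+r(q_2)\cdot 2\sin(\theta/2)\le\bigl(2\delta+\epsilon+(1+\epsilon)\cdot 2\sin(\theta/2)\bigr)\rho,
\]
and choosing $\epsilon,\delta,\theta$ as suitably small constants makes this bound at most $r(q_2)$; since $r(t_i)\ge r(q_i)$ (because $q_i\in\adj(t_i)$ with $\dist(t_i,q_i)=r(q_i)$), it follows that $\{t_1,t_2\}\in E(G)$ and its length is strictly less than $r(q_1)$. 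All three path edges are then strictly shorter than $r(q_1)$, contradicting $G\in\mathcal{T}(P)$.

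Hence every cell contains at most one interferer, each level contributes $O(1)$ interferers, and summing over the $O(\log(d_{\max}(G)/d_{\min}(G)))$ levels bounds $\inter_G(s)$ by the claimed quantity uniformly in $s$. The main obstacle is the joint calibration of the three constants $\epsilon$, $\delta$, and $\theta$: each must be small enough that the triangle-inequality estimate on $\dist(t_1,t_2)$ is actually at most $r(q_2)$, while being kept as strictly positive constants so that the ball and spherical-cap covers each have size $O(1)$ and the total number of cells remains $O(\log(d_{\max}(G)/d_{\min}(G)))$.
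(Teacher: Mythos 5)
Your proof is correct and shares the paper's high-level skeleton --- a logarithmic number of scale classes, an $O(1)$ bound per class via a packing/pigeonhole argument in a doubling space, and a contradiction obtained by exhibiting a short $2$- or $3$-edge bridge for a primitive edge --- but the mechanism for the per-class $O(1)$ bound is genuinely different. The paper buckets the \emph{primitive edges} causing interference at $p$ by length into dyadic classes $E_i$, then pigeonholes \emph{twice on positions}: first to find many near endpoints $V''_i$ in one small ball, then (after a case split) either a single far endpoint adjacent to two of them, or two far endpoints $q_1,q_2$ in another small ball; this yields the bridging path but requires the two-case analysis. You instead attach to each interferer $q$ the triple (radius level, spatial cell of $q$, direction cap of $(t_q-q)/r(q)$) and pigeonhole once on this richer descriptor, which pins down \emph{both} endpoints of the two primitive edges simultaneously and produces the $3$-edge bridge $q_1\to q_2\to t_2\to t_1$ in a single uniform case. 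What your version buys is the elimination of the case analysis and of the second spatial pigeonhole; what it costs is an additional $O(1)$ cover of $S^{d-1}$ by angular caps (not literally Proposition~\ref{prop:doubling}, though standard for fixed $d$) and a three-way calibration of $\epsilon,\delta,\theta$, which you correctly identify and which is easily satisfied (e.g.\ $\delta=\epsilon=\sin\theta=1/8$ makes $\dist(t_1,t_2)<\rho\leq\min\{r(t_1),r(t_2)\}$). Two cosmetic points: two unit vectors in a cap of angular radius $\theta$ subtend an angle up to $2\theta$, so the bound should read $2\sin\theta$ rather than $2\sin(\theta/2)$ (absorbed into the constants); and you should note the degenerate case $t_1=t_2$, where the even shorter path $q_1\to q_2\to t_1$ already bridges $\{q_1,t_1\}$. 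Like the paper, you implicitly assume radii are given by the closure rule $r(q)=\max_{w\in\adj(q)}\dist(q,w)$, which is what makes each interferer's longest edge primitive and gives $r(t_q)\geq r(q)$; this matches the paper's framework, so it is not a gap.
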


\begin{proof}
We first normalize the scale of $P$ to simplify the proof.
Let $Q = \{ p \cdot \alpha \mid p \in P\}$ denote a uniform scaling of $P$ 
by a factor of $\alpha = 1/d_{\min}(G)$ and
let $H$ denote the corresponding communication graph.
That is, $\{u,v\} \in E(G) 
\Leftrightarrow \{ u \cdot \alpha , v \cdot \alpha \} \in E(H)$.
Similarly, scale transmission radii such that 
each node's transmission radius in $Q$ 
is $\alpha$ times its corresponding node's transmission radius in $P$.
Thus, 
\begin{equation}
\label{eqn:transformation}
d_{\min}(H) = 1 \qquad \text{and} \qquad
d_{\max}(H) = \frac{d_{\max}(G)}{d_{\min}(G)} .
\end{equation}
We say an edge $\{q_1,q_2\}\in E(H)$ causes interference at a node $p$
if $p$ is within the transmission range of either $q_1$ or $q_2$.
Let $p$ be a node in $V(H)$ that has interference $\inter(H)$.
Let $E(p) \subseteq E(H)$ be the set of all primitive edges
that cause interference at $p$.
Since there are $\inter(H)$ nodes whose transmission ranges cover $p$,
we get that $|E(p)| \geq \inter(H) / 2$.
That is, there are at least $\inter(H)/2$ primitive edges that cause
interference at node $p$.
Therefore, to prove the theorem it suffices to show that
\begin{equation}
    |E(p)| \in O ( \log ( d_{\max}(H) ) ) .
\label{eqn:log}
\end{equation}

Let $g=\lceil \log (d_{\max}(H)) \rceil$.
Partition $E(p)$ into $g+1$ subsets, $E_0, E_1,\ldots, E_g$, such that
for each $0\leq i\leq g$, $E_i$ is the set of all edges in $E(p)$
whose length is in $[2^i , 2^{i+1} )$.
Since %$d_{\min}(H) = 2^0 d_{\min}(H)$ and
$d_{\max}(H) \leq 2^g$, it follows that
\[
E(p) = \bigcup_{0 \leq i \leq g} E_i
\qquad \text{and} \qquad
\forall i \neq j, \ E_i \cap E_j = \varnothing .
\]
We now show that $|E_i|\in O(1)$ for every $i$, $0\leq i\leq g$,
from which \eqref{eqn:log} follows immediately.

For each integer $i$, $0\leq i\leq g$, let $V_i$ be the set of all
nodes in $V(H)$ that are incident to an edge in $E_i$
and let $V'_i\subseteq V_i$ be the set of nodes in $V_i$
that have $p$ in their transmission radii.
By our assumption of general position,
there is an injective function from the set of primitive edges
in $E_i$ to nodes in $V'_i$, giving that
\begin{equation}
|V_i| \geq |V_i'| \geq |E_i|.
\label{eqn:log2}
\end{equation}
By definition of $E_i$, $V_i$, and $V_i'$,
every node in $V'_i$ is contained in the ball with centre $p$
and radius $2^{i+1}$.
Furthermore, every node $v$ in $V_i$ is contained in the ball with centre $p$
and radius $2^{i+2}$, because either $v \in V'_i$
or $v$ is adjacent to a node $w$ in $V'_i$;
thus, $\dist(p,v) \leq \dist(p,w)+\dist(w,v) \leq 2 \cdot 2^{i+1}$.
By Proposition~\ref{prop:doubling},
for a constant dimension $d$,
  $\mathcal{C}(2^{i+1}, 2^{i-2},d)\in O(1)$ and
  $\mathcal{C}(2^{i+2}, 2^{i-2},d)\in O(1)$.
Suppose $|E_i|\not\in O(1)$.
Hence by \eqref{eqn:log2},
$|E_i|$, $|V_i|$, and $|V'_i|$ are each $\omega(1)$.
In particular, for a sufficiently large point set,
\begin{equation}
    |V'_i|\geq \mathcal{C}(2^{i+1}, 2^{i-2},d)\cdot
    \left[\mathcal{C}(2^{i+2}, 2^{i-2},d)+1\right].
\label{eqn:log3}
\end{equation}
Any ball of radius $2^{i+1}$ can be covered with
$\mathcal{C}(2^{i+1}, 2^{i-2},d)$ balls of radius $2^{i-2}$.
Therefore, by \eqref{eqn:log3} and the pigeonhole principle,
there must be a ball $B_i$ of
radius $2^{i-2}$ that contains a set of
nodes $V''_i$, such that $V''_i\subseteq V'_i$ and
$|V_i''| \geq \mathcal{C}(2^{i+2}, 2^{i-2},d)+1$.
Let $W_i$ be the set of nodes in $V_i$ that are adjacent to some node
in $V''_i$ by some edge in $E_i$.
Since the length of every edge in $E_i$ is at least $2^{i}$
and the ball $B_i$ has radius $2^{i-2}$,
every node in $W_i$ must lie outside $B_i$. Thus,
  \begin{equation}
  \label{equ:empty}
    W_i \cap V''_i = \varnothing.
  \end{equation}
We consider two cases: i) there is a node $q$ in $W_i$ that is adjacent
to at least two nodes in $V''_i$ by edges in $E_i$,
and ii) every node in $W_i$ is adjacent to only one node in $V''_i$
by some edge in $E_i$, i.e., $|W_i|\geq|V''_i|$.

{\em Case i.}
Let $p_1$ and $p_2$ denote two nodes in $V''_i$
such that edges $\{p_1,q\}$ and $\{p_2,q\}$ are in $E_i$.
Without loss of generality, assume that $\dist(p_1,q)>\dist(p_2,q)$
(by our general position assumption).
Consider the path $\langle p_1,p_2,q \rangle$ from $p_1$ to $q$.
This path has two edges.
Also, $\dist(p_2,q)<\dist(p_1,q)$ and $\dist(p_1,p_2)<\dist(p_1,q)$,
because $\dist(p_1,p_2)\leq 2^{i-1}$ (as $p_1$ and $p_2$ are within
a ball of radius $2^{i-2}$) and $\dist(p_1,p_2)\geq 2^{i}$
(as the edge $\{p_1,q\}$ is in $E_i$).
Since $\{p_1,q\}$ is a primitive edge in $H$ and $H\in\mathcal{T}(Q)$,
$\{p_1, q\}$ cannot be bridged, deriving a contradiction.

{\em Case ii.} We have $|W_i| \geq|V''_i| \geq
\mathcal{C}(2^{i+2}, 2^{i-2},d)+1$.
Since every node in $W_i$ lies in a ball of radius $2^{i+2}$
(as $W_i\subseteq V_i$), and a ball of radius $2^{i+2}$ can be covered
with $\mathcal{C}(2^{i+2}, 2^{i-2},d)+1$ balls of
radius $2^{i-2}$, there must be a ball of radius
$2^{i-2}$ that contains at least two nodes $q_1$ and $q_2$
from $W_i$.
By~\eqref{equ:empty}, $W_i\cap V''_i=\varnothing$.
By definition, there must be two edges in $E_i$ that connect $q_1$ and $q_2$
to two distinct nodes $p_1$ and $p_2$ in $V''_i$.
Without loss of generality, assume that $\dist(p_1,q_1)>\dist(p_2,q_2)$.
The length of the edge $\{p_1,q_1\}$ is greater than those of
$\{p_1,p_2\}$, $\{p_2,q_2\}$ and $\{q_2,q_1\}$, because
$\dist(p_1,p_2)\leq 2^{i-1}$,
$\dist(q_1,q_2)\leq 2^{i-1}$,
and $\dist(p_1,q_1)\geq 2^{i}$ (as $\{p_1,q_1\}\in E_i$).
Therefore, every edge of the path of length three 
$\langle p_1,p_2,q_2,q_1 \rangle$
from $p_1$ to $q_1$ has length less than $\dist(p_1,q_1)$.
(Notice that $\{p_1,p_2\}$ and, similarly, $\{q_1, q_2\}$, are in $E(H)$
since both $p_1$ and $p_2$ are inside a ball of radius $2^{i-2}$
and the transmission ranges of $p_1$ and $p_2$ is at least
$2^{i}$, as they are incident to edges in $E_i$.)
Since $\{p_1,q_1\}$ is a primitive edge in $H$ and $H\in\mathcal{T}(Q)$,
$\{p_1, q\}$ cannot be bridged, deriving a contradiction.

A contradiction is derived in both cases. Therefore, \eqref{eqn:log} holds.
The result follows by \eqref{eqn:transformation} and \eqref{eqn:log} 
since set $P$ and graph $G$ correspond to $Q$ and $H$, 
respectively, upon scaling by $1/\alpha = d_{\min}(G)$.
\end{proof}

In the next lemma we show that $\mst(P)'$ is in $\mathcal{T}(P)$.
Consequently, $\mathcal{T}(P)$ is always non-empty.

\begin{lemma}
\label{lem:mst}
For any set of points $P \subseteq \reals^d$, $\mst(P)'\in \mathcal{T}(P)$.
\end{lemma}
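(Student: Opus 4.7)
The plan is to show that any primitive edge of $\mst(P)'$ must in fact be an edge of the underlying tree $\mst(P)$, and then invoke the cut property of the MST to rule out the existence of a short bridging path.

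First, I would unpack the definition of primitive. If $\{p,q\}$ is a primitive edge of $\mst(P)'$, then (WLOG) $r(p) = \dist(p,q)$, where $r(p) = \max_{v \in \adj(p)} \dist(p,v)$ is taken over the neighbours of $p$ in $\mst(P)$. So $q$ is a neighbour of $p$ in $\mst(P)$ itself, giving $\{p,q\} \in E(\mst(P))$.

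Next, I would use the standard cut property. Removing $\{p,q\}$ from $\mst(P)$ splits the tree into two subtrees $T_p$ (containing $p$) and $T_q$ (containing $q$). Under the general position assumption stated earlier in the section, the MST is unique and $\{p,q\}$ is the strictly lightest edge crossing the cut $(V(T_p), V(T_q))$: for otherwise swapping $\{p,q\}$ with any shorter crossing edge would yield a spanning tree of strictly smaller total weight, contradicting optimality. Hence for every pair $a \in V(T_p), b \in V(T_q)$ with $\{a,b\} \neq \{p,q\}$, we have $\dist(a,b) > \dist(p,q)$.

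Finally, I would derive the contradiction. Suppose for contradiction that $\{p,q\}$ is bridged, so there is a path $\pi$ of at most three edges from $p$ to $q$ in $\mst(P)'$, each of length strictly less than $\dist(p,q)$ (in particular $\pi$ does not use the edge $\{p,q\}$). Since $\pi$ begins in $V(T_p)$ and ends in $V(T_q)$, at least one of its edges $\{a,b\}$ must cross the cut. By the preceding paragraph $\dist(a,b) > \dist(p,q)$, contradicting the defining property of $\pi$. Therefore no primitive edge of $\mst(P)'$ is bridged, i.e.\ $\mst(P)' \in \mathcal{T}(P)$.

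There is no real obstacle here: the only mild subtlety is confirming that primitive edges of the closure lie in the tree proper (so that the MST cut property actually applies), after which the cut property finishes the argument in one line. The path-length bound of three is not even used — the argument works for any bridging path in the looser sense of a path whose edges are all shorter than $\dist(p,q)$.
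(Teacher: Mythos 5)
Your proof is correct and follows essentially the same route as the paper's: remove the primitive edge from $\mst(P)$, observe that any bridging path must contain an edge crossing the resulting cut, and use the exchange (cut) property to contradict minimality of the spanning tree. The only difference is that you make explicit the preliminary step that a primitive edge of the closure $\mst(P)'$ is necessarily an edge of $\mst(P)$ itself (via general position), a point the paper's proof leaves implicit.
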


\begin{proof}
The transmission range of each node $p \in P$ is determined by the length of
the longest edge adjacent to $p$ in $\mst(P)$.
Suppose there is a primitive edge $\{p_1, p_2\}\in \mst(P)$ that is bridged.
Therefore, there is a path $T$ from $p_1$ to $p_2$ in $\mst(P)'$
that contains at most three edges, each of which is of length less than
$\dist(p_1,p_2)$.
Removing the edge $\{p_1,p_2\}$ partitions $\mst(P)$ into two connected
components, where $p_1$ and $p_2$ are in different components.
By definition, $T$ contains an edge that spans the two components.
The two components can be joined using this edge
(of length less than $\dist(p_1,p_2)$) to obtain a new spanning tree whose
weight is less than that of $\mst(P)$, deriving a contradiction.
Therefore, no primitive edge $\{p_1, p_2\}\in \mst(P)$ can be bridged,
implying $\mst(P)' \in \mathcal{T}(P)$.
\end{proof}

Theorem~\ref{thm:log} implies that the interference of any graph $G$
in $\mathcal{T}(P)$
is bounded asymptotically by the logarithm of the ratio of the longest
and shortest edges in $G$.
While this ratio can be arbitrarily large in the worst case,
we show that the ratio is bounded for many typical distributions of points.
Specifically, if the ratio is $O(n^c)$ for some constant $c$,
then the maximum interference is $O(\log n)$.

\begin{definition}[$\mathcal{D}$]%: a class of distributions]
\label{def:classD}
Let $\mathcal{D}$ denote the class of distributions over $[0,1]^d$ such that 
for any $D \in \mathcal{D}$ and any set $P$ of $n \geq 2$ points 
selected independently at random according to $D$,
the minimum distance between any two points in $P$ is greater than $n^{-c}$
with high probability, for some constant $c$ (independent of $n$).
\end{definition}

\begin{theorem}
\label{thm:main}
For any integers $d \geq 1$ and $n\geq 2$,
any distribution $D \in \mathcal{D}$,
and any set $P$ of $n$ points, each of which is selected independently
at random over $[0,1]^d$ according to distribution $D$,
with high probability, for all graphs $G\in\mathcal{T}(P)$,
$\inter(G) \in O(\log n)$.
\end{theorem}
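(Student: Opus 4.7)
The plan is to combine Theorem~\ref{thm:log} with the defining property of the distribution class $\mathcal{D}$, using the fact that all points lie inside the bounded region $[0,1]^d$. The bound in Theorem~\ref{thm:log} is purely geometric and depends only on the ratio $d_{\max}(G)/d_{\min}(G)$, so the task reduces to controlling that ratio with high probability.

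First, I would observe that since $V(G) = P \subseteq [0,1]^d$, every edge of $G$ has length at most $\mathrm{diam}([0,1]^d) = \sqrt{d}$. Because $d$ is a fixed constant, this gives $d_{\max}(G) \leq \sqrt{d} \in O(1)$ deterministically, regardless of which $G \in \mathcal{T}(P)$ we pick.

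Next, I would obtain a lower bound on $d_{\min}(G)$. Every edge of $G$ has both endpoints in $P$, so $d_{\min}(G)$ is at least the minimum pairwise distance in $P$. By Definition~\ref{def:classD}, for any $D \in \mathcal{D}$ there is a constant $c$ such that, with high probability, the minimum pairwise distance in $P$ exceeds $n^{-c}$. Conditioning on this high-probability event, $d_{\min}(G) > n^{-c}$ simultaneously for every $G \in \mathcal{T}(P)$, because this bound holds at the level of the point set, not the graph. Hence $d_{\max}(G)/d_{\min}(G) < \sqrt{d} \cdot n^{c} \in O(n^c)$ with high probability.

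Finally, I would invoke Theorem~\ref{thm:log} to conclude that, on the same high-probability event, every $G \in \mathcal{T}(P)$ satisfies
\[
  \inter(G) \in O\!\left(\log\!\left(\frac{d_{\max}(G)}{d_{\min}(G)}\right)\right) \subseteq O(\log(n^c)) = O(\log n),
\]
since $c$ is a fixed constant. There is no substantial obstacle here: the real work was done in Theorem~\ref{thm:log} and in verifying that $\mst(P)' \in \mathcal{T}(P)$ (Lemma~\ref{lem:mst}), so that $\mathcal{T}(P)$ is non-empty; the present theorem is essentially a packaging of those results together with the diameter bound from $[0,1]^d$ and the minimum-distance guarantee of $\mathcal{D}$. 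The only mild subtlety is that the high-probability statement must be uniform over all $G \in \mathcal{T}(P)$, which is automatic because the controlling event ($d_{\min}$ of $P$ is at least $n^{-c}$) depends only on $P$.
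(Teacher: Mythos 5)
Your proposal is correct and follows essentially the same route as the paper's own proof: bound $d_{\max}(G)$ by $\sqrt{d}$ from the domain $[0,1]^d$, bound $d_{\min}(G)$ below by $n^{-c}$ with high probability via Definition~\ref{def:classD}, and apply Theorem~\ref{thm:log}. Your added remark that the controlling event depends only on $P$ (so the bound holds uniformly over all $G\in\mathcal{T}(P)$) is a small clarification the paper leaves implicit, but the argument is the same.
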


\begin{proof}
Let $d_{\min}(G)= \min_{\{s,t\}\in E(G)} \dist(s,t)$
and $d_{\max}(G)=\max_{\{s,t\}\in E(G)} \dist(s,t)$.
Since points are contained in $[0,1]^d$, $d_{\max}(G)\leq \sqrt{d}$.
Points in $P$ are distributed according to a distribution $D \in \mathcal{D}$.
By Definition~\ref{def:classD},
with high probability, $d_{\min}(G)\geq n^{-c}$ for some constant $c$.
Thus, with high probability, we have
\begin{equation}
\label{eqn:main}
    \log \left( \frac{d_{\max}(G)}{d_{\min}(G)} \right) \leq
    \log \left( \frac{\sqrt{d}}{n^{-c}} \right).
\end{equation}
The result follows from \eqref{eqn:main}, Theorem~\ref{thm:log}, and the fact
that $\log ( n^c \sqrt{d} ) \in O(\log n)$ when $d$ and $c$ are constant.
\end{proof}

\begin{lemma}
\label{lem:bonded-Dist}
Let $D$ be a distribution with domain $[0,1]^d$,
for which there is a constant $c'$ such that for any
point $x\in [0,1]^d$, we have $D(x)\leq c'$,
where $D(x)$ denotes the probability density function of $D$ at $x \in [0,1]^d$.
Then $D\in\mathcal{D}$.
\end{lemma}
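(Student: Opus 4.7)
The plan is a direct union-bound argument that exploits the uniform density bound to control the probability that any pair of sampled points lies within distance $n^{-c}$ of each other.

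First, I would analyse a single pair $p_i,p_j$ drawn independently from $D$. Conditioning on the location of $p_i$, the probability that $p_j$ lies in the Euclidean ball $B(p_i,r)$ equals $\int_{B(p_i,r)\cap[0,1]^d} D(x)\,dx$. Using $D(x)\le c'$ pointwise and bounding the integrand by the full ball, this integral is at most $c'\, V_d\, r^d$, where $V_d$ denotes the volume of the unit ball in $\reals^d$, a constant depending only on $d$. The bound is independent of $p_i$, so it holds unconditionally.

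Second, I would apply a union bound over the $\binom{n}{2}<n^2$ unordered pairs to obtain
\[
   \pr\!\left[\,\min_{i\ne j}\dist(p_i,p_j)\le r\,\right] \;\le\; c'\,V_d\,n^{2}\,r^{d}.
\]
Choosing $r=n^{-c}$ with $c$ any constant satisfying $cd\ge 3$ (for instance $c=3/d$, which depends on $d$ but not on $n$), the right-hand side is at most $c'V_d\,n^{-1}$. Hence the minimum pairwise distance exceeds $n^{-c}$ with probability at least $1-c'V_d\,n^{-1}$, which meets the high-probability requirement in Definition~\ref{def:classD}, establishing $D\in\mathcal{D}$.

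The only delicate point is making sure the constant $c$ is independent of $n$ while still beating the quadratic factor coming from the union bound; this is handled by letting $c$ depend on $d$ (and, if one wants failure probability $n^{-k}$ for a larger exponent $k$, by enlarging $c$ to any value with $cd\ge k+2$). No other obstacle arises, since the density bound $c'$ and the constant $V_d$ absorb cleanly into the final estimate.
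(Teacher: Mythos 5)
Your proposal is correct and follows essentially the same route as the paper: a union bound over pairs combined with the observation that the bounded density makes the probability of landing in a ball of radius $r$ at most a constant times $r^d$, then choosing the exponent $c$ so that the union bound yields failure probability $O(1/n)$. The only cosmetic differences are that the paper unions over the $n$ events ``some $p_j$ is near $p_i$'' rather than over the $n^2$ pairs, bounds the ball's volume by the enclosing cube $(2r)^d$ instead of $V_d r^d$, and tunes its exponent $c''=1+\frac{\log c'+2}{d}$ to cancel the constants exactly, whereas you absorb them asymptotically.
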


\begin{proof}
Let $p_1, p_2, \ldots, p_n$,
be $n \geq 2$ independent random points in $[0,1]^d$ with distribution $D$.
Let $c'' = 1+\frac{\log c'+2}{d}$ and let $\mathcal{E}_i$, $1\leq i\leq n$,
denote the event that there is a point $p_j$, $j\neq i$,
such that $\dist(p_i,p_j)\leq n^{-c''}$.
Let the random variable $d_{\min}$ be equal to $\min_{i\neq j}\dist(p_i, p_j)$.
  We have
  \begin{equation}
  \label{equ:union}
\hspace{-1pt}
    \pr(d_{\min}\leq n^{-c''})
=\pr\left(\bigvee_{1\leq i\leq n}\mathcal{E}_i\right)
\leq \sum_{1\leq i \leq n}\pr(\mathcal{E}_i),
\hspace{-1pt}
  \end{equation}
where the inequality holds by the union bound.
To establish an upper bound on $\pr(\mathcal{E}_i)$, 
consider a $d$-dimen\-sion\-al ball $B_i$ with centre $p_i$ 
and radius $n^{-c''}$.
The probability that there is point $p_j$, $j\neq i$, 
in that ball is at most $c'$ times the volume of $B_i\cap [0,1]^d$.
The volume of $B_i\cap [0,1]^d$ is  at most $(2n^{-c''})^d$.
Therefore, $\pr(\mathcal{E}_i)\leq c'(2n^{-c''})^d$ for every $1\leq i\leq n$.
Thus, by \eqref{equ:union}, we get
\begin{align*}
\pr(d_{\min}> n^{-c''})
& \geq 1-\sum_{1\leq i \leq n}\pr(\mathcal{E}_i)\\
&\geq 1-n\cdot c' \left(2n^{-c''}\right)^d\\
%&= 1-\frac{c'2^d}{n^{c''d-1}}\\
%&=1-\frac{c'2^d}{n^{\left(1+\frac{\log c'+2}{d}\right)d-1}}\\
&=1-\frac{c'2^d}{n^{d+\log c'+1}}\\
%&=1-\frac{c'2^d}{n\cdot n^{d+\log c'}}\\
&\geq 1-\frac{c'2^d}{n\cdot 2^{d+\log c'}}\\
&=1-\frac{1}{n}.
\end{align*}
%\begin{align*}
%\pr(d_{\min}> n^{-c''})
%& \geq 1-\sum_{1\leq i \leq n}\pr(\mathcal{E}_i)
%&&\geq 1-n\cdot c' \left(2n^{-c''}\right)^d\\
%%&= 1-\frac{c'2^d}{n^{c''d-1}}\\
%%&=1-\frac{c'2^d}{n^{\left(1+\frac{\log c'+2}{d}\right)d-1}}\\
%=1-\frac{c'2^d}{n^{d+\log c'+1}}
%%&=1-\frac{c'2^d}{n\cdot n^{d+\log c'}}\\
%&\geq 1-\frac{c'2^d}{n\cdot 2^{d+\log c'}}
%&&=1-\frac{1}{n}.
%\end{align*}
Therefore, $D\in\mathcal{D}$.
Note, here $c=c''$ in Definition~\ref{def:classD}.
\end{proof}

\begin{cor}\label{cor:uniformDist}
The uniform distribution with domain $[0,1]^d$ is in $\mathcal{D}$.
\end{cor}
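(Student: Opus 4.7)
The plan is to invoke Lemma~\ref{lem:bonded-Dist} directly, since the corollary follows as an essentially immediate special case. First I would recall that the uniform distribution on $[0,1]^d$ has probability density function $D(x) = 1$ for every $x \in [0,1]^d$, since the volume of the domain is $1$. In particular, $D(x) \le 1$ for all $x$, so the hypothesis of Lemma~\ref{lem:bonded-Dist} is satisfied with the constant $c' = 1$.

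Next I would apply the lemma: for $n \ge 2$ points drawn independently from the uniform distribution on $[0,1]^d$, the minimum pairwise distance exceeds $n^{-c''}$ with high probability, where $c'' = 1 + (\log c' + 2)/d = 1 + 2/d$ in this specialization. This gives exactly the condition in Definition~\ref{def:classD} with the constant $c = 1 + 2/d$ (which is independent of $n$), so the uniform distribution lies in $\mathcal{D}$.

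There is no real obstacle here; the only thing to check is that the uniform density on $[0,1]^d$ is indeed bounded by a constant, which is immediate. The corollary is therefore a one-line consequence of the preceding lemma and no additional probabilistic argument is needed.
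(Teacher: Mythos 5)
Your proof is correct and takes exactly the approach the paper intends: the corollary is stated without proof immediately after Lemma~\ref{lem:bonded-Dist}, precisely because the uniform density on $[0,1]^d$ is identically $1$ and hence bounded by the constant $c'=1$. Your explicit specialization $c'' = 1 + (\log 1 + 2)/d = 1 + 2/d$ is also correct.
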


By Corollary~\ref{cor:uniformDist}
and Theorem~\ref{thm:main}, we can conclude that
if a set $P$ of $n\geq 2$ points is distributed uniformly in $[0,1]^d$,
then with high probability, any communicaiton graph in
$G\in\mathcal{T}(P)$ will have maximum interference $O(\log n)$.
This is expressed formally in the following corollary:

\begin{cor}%\label{cor:main}
Choose any integers $d \geq 1$ and $n\geq 2$.
Let $P$ be a set of $n$ points, each of which is
selected independently and uniformly at random over $[0,1]^d$.
With high probability, for all graphs $G\in\mathcal{T}(P)$,
\[ \inter(G) \in O(\log n).  \]
\end{cor}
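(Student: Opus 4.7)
The plan is to derive this corollary directly as an instantiation of the two general results established immediately before it, with essentially no new work required. The statement is precisely what one obtains by specializing Theorem~\ref{thm:main} to the uniform distribution on $[0,1]^d$, so the only task is to verify that the uniform distribution satisfies the hypothesis of that theorem, namely that it belongs to the class $\mathcal{D}$ introduced in Definition~\ref{def:classD}.

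First I would observe that the uniform distribution on $[0,1]^d$ has probability density function identically equal to $1$ on its domain, and $0$ elsewhere. This density is trivially bounded above by the constant $c' = 1$. Hence Lemma~\ref{lem:bonded-Dist} applies (with $c' = 1$), yielding that the uniform distribution lies in $\mathcal{D}$; this is exactly the content of Corollary~\ref{cor:uniformDist}, which is already stated in the excerpt.

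Having verified membership in $\mathcal{D}$, I would invoke Theorem~\ref{thm:main} with $D$ taken to be the uniform distribution on $[0,1]^d$. The theorem asserts that, with high probability, every graph $G \in \mathcal{T}(P)$ satisfies $\inter(G) \in O(\log n)$, which is exactly the conclusion required by the corollary. Since $d$ is a fixed constant by hypothesis, the $O(\log n)$ bound absorbs any dependence on $d$ from the proof of Theorem~\ref{thm:main}.

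There is no real obstacle here: the heavy lifting was done in Theorem~\ref{thm:log} (the logarithmic bound in terms of the ratio $d_{\max}(G)/d_{\min}(G)$), in Lemma~\ref{lem:bonded-Dist} (bounding $d_{\min}$ from below via a union-bound argument over pairs of points), and in Theorem~\ref{thm:main} (combining these ingredients). The corollary is purely a notational specialization, and I would present it as a one- or two-sentence proof that chains together Corollary~\ref{cor:uniformDist} and Theorem~\ref{thm:main}.
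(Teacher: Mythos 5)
Your proposal is correct and matches the paper exactly: the paper derives this corollary by chaining Corollary~\ref{cor:uniformDist} (uniform density is bounded by $c'=1$, so Lemma~\ref{lem:bonded-Dist} places it in $\mathcal{D}$) with Theorem~\ref{thm:main}, which is precisely your one-line specialization.
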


%%% SUBSECTION
\subsection{Mobility}
Our results apply to the setting of mobility
(e.g., mobile ad hoc wireless networks).
Each node in a mobile network must periodically
exchange information with its neighbours to
update its local data storing positions and transmission radii 
of nodes within its local neighbourhood.
The distribution of mobile nodes depends on the mobility model,
which is not necessarily uniform.
For example, when the network is distributed over a disc or a box-shaped region,
the probability distribution associated with the random waypoint model 
achieves its maximum at the centre of the region,
whereas the probability of finding a node close to the region's boundary 
approaches zero \cite{hyytia2006}.
Since the maximum value of the probability distribution associated with 
the random waypoint model is constant \cite{hyytia2006}, 
by Lemma~\ref{lem:bonded-Dist} and Theorem~\ref{thm:main}, 
we can conclude that at any point in time, the maximum interference 
of the network is $O(\log n)$ with high probability.
In general, this holds for any random mobility model whose corresponding 
probability distribution has a constant maximum value.

%%% SUBSECTION
\subsection{Local Algorithm}
\label{sec:localAlg}
As discussed in Section~\ref{sec:introduction.motivation},
existing local algorithms for topology control attempt to reduce 
transmission radii, but not necessarily the maximum interference. 
By Lemma~\ref{lem:mst} and Theorem~\ref{thm:main},
if $P$ is a set of $n$ points selected
according to a distribution in $\mathcal{D}$,
then with high probability $\inter(\mst(P)')\in O(\log n)$.
Unfortunately, a minimum spanning tree cannot be generated
using only local information \cite{khan2009}.
Thus, an interesting question is whether each node can assign itself
a transmission radius using only local information
such that the resulting communication graph belongs to $\mathcal{T}(P)$
while remaining connected.
We answer this question affirmatively and
present the first local algorithm ({\sc LocalRadiusReduction}),
that assigns a transmission radius to each node such that
if the initial communication graph $G_{\max}$ is connected,
then the resulting communication graph is a connected spanning subgraph
of $G_{\max}$ that belongs to $\mathcal{T}(P)$.
Consequently, the resulting topology has
maximum interference $O(\log n)$ with high probability
when nodes are selected according to any distribution in $\mathcal{D}$.
Our algorithm can be applied to any existing topology to refine it
and further reduce its maximum interference. Thus, our solution
can be used either independently, or paired with another topology control
strategy.
The algorithm consists of three phases, which we now describe.

Let $P$ be a set of $n\geq 2$ points in $\reals^d$ and
let $r_{\max}:P \to \reals^+$
be a function that returns the maximum transmission radius allowable
at each node.
Let $G_{\max}$ denote the communication graph determined by $P$ and $r_{\max}$.
Suppose $G_{\max}$ is connected.
Algorithm {\sc LocalRadiusReduction} assumes that each node is initially aware
of its maximum transmission radius,
its spatial coordinates, and its unique identifier.

The algorithm begins with a local data acquisition phase, during which
every node broadcasts its identity, maximum transmission radius,
and coordinates in a node data message.
Each message also specifies whether the data is associated with the
sender or whether it is forwarded from a neighbour.
Every node records the node data it receives
and retransmits those messages that were not previously forwarded.
Upon completing this phase, each node is aware of the
corresponding data for all nodes within its 2-hop neighbourhood.
The algorithm then proceeds to an asynchronous
transmission radius reduction phase.

Consider a node $u$ and let $f$ denote its furthest neighbour.
If $u$ and $f$ are bridged in $G_{\max}$,
then $u$ reduces its transmission radius to correspond to that of its
next-furthest neighbour $f'$, where $\dist(u,f') < \dist(u,f)$.
This process iterates until $u$ is not bridged with its
furthest neighbour within its reduced transmission radius.
We formalize the local transmission radius reduction algorithm
in the pseudocode in Table~\ref{tab:localRadiusReduction}
that computes the new transmission radius $r'(u)$ at node $u$.

%\medskip

%\noindent
\begin{table}
\centering
\begin{tabular}{@{}r@{\hspace{4pt}}l@{}}
\multicolumn{2}{l}{{\sc Algorithm LocalRadiusReduction}$(u)$} \\
1  & $radiusReductionComplete \leftarrow$ {\bf false} \\
2  & $r'(u) \leftarrow r_{\max}(u)$ \\
3  & $f \leftarrow u$ \sf // identify $u$'s furthest neighbour $f$ \\
4  & {\bf for} each $v \in \adj(u)$ \\
5  & \tp{\bf if} $\dist(u,v) > \dist(u,f)$ \\
6  & \tp\tp$f \leftarrow v$ \\
7  & {\bf while} $\lnot radiusReductionComplete$ \\
8  & \tp$radiusModified \leftarrow$ {\bf false} \\
9  & \tp{\bf if} {\sc Bridged}$(u,f)$ \\
10 & \tp\tp$radiusModified \leftarrow$ {\bf true} \\
11 & \tp\tp$f \leftarrow u$
\sf // identify next neighbour within distance $r'(u)$ \\
12 & \tp\tp{\bf for} each $v \in \adj(u)$ \\
13 & \tp\tp\tp{\bf if} $\dist(u,v) < r'(u)$ {\bf and}
	$\dist(u,v) > \dist(u,f)$ \\
14 & \tp\tp\tp\tp$f \leftarrow v$ \\
15 & \tp\tp$r'(u)\leftarrow \dist(u,f)$ \\
16 & \tp$radiusReductionComplete \leftarrow \lnot radiusModified$ \\
17 & {\bf return} $r'(u)$ \\
%\end{tabular}
%\caption{Algorithm {\sc LocalRadiusReduction}$(u)$} 
%\label{tab:localRadiusReduction}
%\end{table}
%
%%\noindent
%\begin{table}
%\begin{tabular}{@{}r@{\hspace{4pt}}l@{}}
~ \\
\multicolumn{2}{l}{{\sc Algorithm Bridged}$(a,b)$} \\
1  & $result \leftarrow$ {\bf false} \\
2  & {\bf for} each $v \in \adj(a)$ \\
3 & \tp {\bf if} $\max\{ \dist(a, v) , \dist(v,b) \} < \dist(a,b)$
{\bf and} $v \in \adj(b)$\\
4  & \tp\tp$result \leftarrow$ {\bf true} \\
5  & \tp{\bf for} each $w \in \adj(v)$ \\
6  & \tp\tp{\bf if}
$\max\{ \dist(a, v), \dist(v,w), \dist(w,b) \} < \dist(a,b)$ \\
& \tp\tp\tp\tp
{\bf and} $w \in \adj(b) $ \\
7  & \tp\tp\tp$result \leftarrow$ {\bf true} \\
8  & {\bf return} $result$ \\
\end{tabular}
%\caption{Algorithm {\sc Bridged}$(a,b)$} 
\caption{Algorithm {\sc LocalRadiusReduction}} 
%\label{tab:bridged}
\label{tab:localRadiusReduction}
\end{table}

%\medskip

Clearly, Algorithm {\sc LocalRadiusReduction}
is 2-local.
Since transmission radii are decreased monotonically (and never increased),
the while loop iterates $O(\Delta)$ times,
where $\Delta$ denotes the maximum vertex degree in $G_{\max}$.
Consequently,
since each call to the subroutine {\sc Bridged} terminates 
in $O(\Delta^2)$ time,
each node determines its reduced transmission radius $r'(u)$
in $O(\Delta^3)$ time.

After completing the transmission radius reduction phase,
the algorithm concludes with one final adjustement in the
transmission radius to remove asymmetric edges.
In this third and final phase,
each node $u$ broadcasts its reduced transmission radius $r'(u)$.
Consider the set of nodes
$\{v_1, \ldots , v_k\} \subseteq \adj(u)$
such that $\dist(u,v_i) = r'(u)$ for all $i$
(when points are in general position, $k=1$, and there is a unique such
node $v_1$).
If $r'(v_i) < r'(u)$ for all $i$, then $u$ can reduce
its transmission radius to that of its furthest neighbour with which
bidirectional communication is possible.
Specifically,
\begin{equation}
\label{eqn:phase3}
r'(u) \leftarrow
\max_{\substack{v \in \adj(u) \\ \dist(u,v) \leq \min \{ r'(u), r'(v) \} }}
\dist(u,v) .
\end{equation}
The new value of $r'(u)$ as defined in \eqref{eqn:phase3} is straightforward
to compute in $O(\Delta)$ time.

\begin{lemma}
\label{lem:algCorrectness}
The communication graph constructed by Algorithm {\sc LocalRadiusReduction}
is in $\mathcal{T}(P)$
and is connected if the initial communication graph $G_{\max}$ is connected.
\end{lemma}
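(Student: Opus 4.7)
The plan is to first establish a structural observation that trivializes phase~3 in general position, then use it for both parts of the lemma. Throughout, let $r'$ and $r''$ denote transmission radii at the end of phases~2 and~3, and for each node $u$ let $f^*(u)$ denote the unique (by general position) furthest neighbour of $u$ in $\adj(u)$ within distance $r'(u)$; phase~2 ensures $r'(u) = \dist(u, f^*(u))$ and that $\{u, f^*(u)\}$ is not bridged in $G_{\max}$. I would first show $r'(f^*(u)) \geq \dist(u, f^*(u))$: since {\sc Bridged} is a symmetric predicate on edges of $G_{\max}$, the same non-bridging certificate applies to $f^*(u)$'s execution of phase~2, and by general position $u$ is the unique neighbour of $f^*(u)$ at distance $\dist(u, f^*(u))$. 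Hence $f^*(u)$'s phase~2 either halts at a strictly larger radius, or descends until $u$ becomes its current furthest and then halts because $\{f^*(u), u\}$ is non-bridged. Consequently, the phase-3 reduction condition ``$r'(v_i) < r'(u)$ for all $v_i$ at distance $r'(u)$'' fails at $v_i = f^*(u)$, so phase~3 leaves radii unchanged and $r''(u) = r'(u)$ for every $u$.

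For the $\mathcal{T}(P)$-membership claim, take any primitive edge $\{u, v\} \in E(G')$ and, without loss of generality, assume $r''(u) = \dist(u, v)$. By the observation above, $r''(u) = r'(u)$, so $v = f^*(u)$, and phase~2 has certified that $\{u, v\}$ is not bridged in $G_{\max}$. Since $G' \subseteq G_{\max}$, any path witnessing a bridging of $\{u, v\}$ in $G'$ would also witness bridging in $G_{\max}$, a contradiction.

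For connectedness, assume $G_{\max}$ is connected but suppose, for contradiction, that $G'$ is disconnected, yielding a vertex partition $A \sqcup B$ with no cross edges in $G'$. Let $\{a, b\}$ with $a \in A$, $b \in B$ be the unique shortest cross edge of $G_{\max}$. I would show $r'(a) \geq \dist(a, b)$ (the argument for $b$ is symmetric), which by the observation gives $r''(a) \geq \dist(a, b)$. Consider phase~2 of $a$, which examines $a$'s $G_{\max}$-neighbours in decreasing order of distance and drops $r'(a)$ only when the current furthest is bridged. If $r'(a)$ dropped below $\dist(a, b)$, then because $b$ is the unique neighbour of $a$ at distance $\dist(a, b)$, phase~2 must at some iteration have set $b$ as $a$'s current furthest and then dropped it, i.e., declared $\{a, b\}$ bridged in $G_{\max}$. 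The resulting bridging path in $G_{\max}$, having all edges of length less than $\dist(a, b)$, must cross $(A, B)$, producing a strictly shorter cross edge and contradicting the minimality of $\{a, b\}$. Together with the symmetric bound for $b$, this forces $\{a, b\} \in E(G')$, the sought contradiction.

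The main technical obstacle is the structural observation in the first paragraph: recognizing that the symmetric nature of {\sc Bridged} together with general position forces the phase-2 descent at the opposite endpoint to terminate exactly at the edge under consideration, which in turn makes phase~3 a no-op. Once this observation is in hand, both parts of the lemma reduce to short arguments appealing respectively to phase~2's non-bridging guarantee and to the shortest-cross-edge minimality; without it, one would have to deal separately with primitive edges potentially introduced by a non-trivial phase~3 reduction, none of which are directly certified by phase~2.
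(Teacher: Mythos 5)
Your proof is correct and follows essentially the same route as the paper's: the connectivity part is the same extremal argument (a shortest edge crossing the disconnecting cut cannot be bridged, because a bridging path consists of strictly shorter edges and must itself cross the cut), and the $\mathcal{T}(P)$ part is the same observation that an edge not bridged in $G_{\max}$ cannot be bridged in the subgraph $G_{\min}$. The one substantive difference is your explicit structural analysis of phase~3. The paper dismisses phase~3 in a parenthetical aside when arguing that a primitive edge of $G_{\min}$ must have been certified non-bridged, whereas you prove directly that the far endpoint $f^*(u)$ of the radius-determining edge finishes phase~2 with $r'(f^*(u)) \geq \dist(u,f^*(u))$ (using the symmetry of {\sc Bridged} and the fact that the phase-2 descent visits neighbours in decreasing order of distance and halts at the first non-bridged one), so that phase~3 can never create a primitive edge that phase~2 did not certify. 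This is a worthwhile tightening of a step the paper leaves implicit. One imprecision: your claim that $r''(u)=r'(u)$ for every $u$ is not literally true. If phase~2's loop exits on its first iteration with $r'(u)=r_{\max}(u)$ strictly larger than the distance to $u$'s furthest neighbour, then no neighbour lies at distance exactly $r'(u)$, the phase-3 condition holds vacuously, and the rule \eqref{eqn:phase3} shrinks $r'(u)$ to $\dist(u,f^*(u))$. This does not change the communication graph, and the two places where you invoke the observation (to conclude $v=f^*(u)$ for a primitive edge, and to conclude $r''(a)\geq \dist(a,b)$ for the shortest cut edge) survive under the corrected statement that $r''(u)\in\{r'(u),\dist(u,f^*(u))\}$, so the proof stands.
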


\begin{proof}
Let $G_{\min}$ denote the communication graph constructed
by Algorithm {\sc LocalRadiusReduction}.
First, we prove that $G_{\min}$ is connected if $G_{\max}$ is connected.
Let
\begin{align*}
E_{\dif} = \{ & \{u,v\} \mid \{u,v\} \in E(G_{\max}) \setminus E(G_{\min})
~ \text{and $u$ and $v$} \\
& \text{belong to different connected components of $G_{\min}$}
\}
\end{align*}
Suppose that $G_{\max}$ is connected and $G_{\min}$ is not connected.
Therefore, $E_{\dif}\neq \varnothing$.
Let
\begin{equation}
\label{eqn:algCorrectness}
    \{u',v'\} \leftarrow \argmin_{\{u,v\} \in E_{\dif}} \dist(u,v) .
\end{equation}
Since $\{u',v'\}\not\in E(G_{\min})$, then we have that
either $r'(u')< \dist(u',v')$ or $r'(v')< \dist(u',v')$.
Without loss of generality, assume $r'(u')< \dist(u',v')$.
This implies that edge $\{u',v'\}$ is bridged in $G_{\max}$
since, otherwise, $u'$ could not reduce its transmission radius to less
than $\dist(u',v')$.
By Definition~\ref{def:bridge}, there is a path $T$ between
$u'$ and $v'$ in $G_{\max}$ that contains at most three edges,
each of which is of length less than $\dist(u',v')$.
Since $T$ spans two different connected components in $G_{\min}$,
there is an edge $\{u'',v''\}$ in $T$ such that $u''$ and $v''$
belong to two different connected components.
Therefore, $\{u'',v''\}\in E_{\dif}$,
as $\{u'',v''\} \in E(G_{\max})$ and $\{ u'',v''\} \not\in E(G_{\min})$.
Thus, $\dist(u'',v'') < \dist(u',v')$, contradicting \eqref{eqn:algCorrectness}.
Therefore, $G_{\min}$ is connected if and only if $G_{\max}$ is connected.

It remains to show that $G_{\min}\in \mathcal{T}(P)$.
Let $\{u,v\}$ be any primitive edge in $E(G_{\min})$.
It suffices to show that $\{u,v\}$ is not bridged in $G_{\min}$.
By Definition~\ref{def:primitiveEdge},
we have that $\dist(u,v) = \min\{ r'(u), r'(v) \}$.
Without loss of generality, assume $r'(u)=\dist(u,v)$.
The edge $\{u,v\}$ is not bridged in $G_{\max}$, otherwise the transmission
radius of $u$ could be further reduced, resulting in the removal
of $\{u,v\}$ at the end of the third phase (where asymmetric edges are removed).
Consequently, $\{u,v\}$ is not bridged in $G_{\min}$, as $G_{\min}$
is a subgraph of $G_{\max}$ and any edge that is bridged in $G_{\min}$
is also bridged in $G_{\max}$.
\end{proof}

More generally, since transmission radii are only decreased,
it can be shown that $G_{\min}$ and $G_{\max}$
have the same number of connected components by applying
Lemma~\ref{lem:algCorrectness} on every connected component of $G_{\max}$.

%%% SECTION - SIMULATION
\section{Simulation}\label{sec:simulation}
%%%%%%%%%%%%%%%%%%%%%%%%%%%%
% Bounding the Maximum Interference of Random Sensors in the Plane
% SECTION - SIMULATION
%
% file last modified on November 16, 2011
%%%%%%%%%%%%%%%%%%%%%%%%%%%%

We simulated our local interference minimization algorithm 
to evaluate its performance in static and mobile wireless networks.
In both settings, each node collects the list of its 2-hop neighbours
in two rounds, applies the algorithm to reduce its transmission radius,
and then broadcasts its computed transmission radius 
so neighbouring nodes can eliminate asymmetric edges and
possibly further reduce their transmission radii.
By the end of this stage, all asymmetric edges are removed
and no new asymmetric edges are generated.
Consequently, a node need not broadcast its transmission radius again
after it has been further reduced.

We applied two mobility models to simulate mobile networks:
random walk and random waypoint \cite{johnson1996}.
In both models each node's initial position
is a point selected uniformly at random over the simulation region.
In the random walk model, each node selects a new speed and direction
uniformly at random over $[v_{\min}, v_{\max}]$ and $[0, 2\pi)$, respectively,
at regular intervals.
When a node encounters the simulation region's boundary,
its direction is reversed (a rotation of $\pi$)
to remain within the simulation region with the same speed.
In the random waypoint model, 
each node moves along a straight trajectory with constant speed 
toward a destination point selected uniformly at random over 
$[v_{\min}, v_{\max}]$ and the simulation region, respectively.
Upon reaching its destination, the node stops for a random pause time,
after which it selects a new random destination and speed, 
and the process repeats.

%%% SUBSECTION
\subsection{Simulation Parameters}
We set the simulation region's dimensions to 1000 metres $\times$ 1000 metres.
For both static and dynamic networks, we varied the number of nodes $n$ 
from $50$ to $1000$ in increments of $50$.
We fixed the maximum transmission radius $r_{\max}$ for each network
to 100, 200, or 300 metres.
To compute the average maximum interference for static networks, 
for each $n$ and $r_{\max}$ we generated 100,000 static networks, each with 
$n$ nodes and maximum transmission radius $r_{\max}$, 
distributed uniformly at random in the simulation region.
To compute the average maximum interference for mobile networks, 
for each $n$ and $r_{\max}$ we generated 100,000 snapshots for each mobility 
model, each with $n$ nodes and maximum transmission radius $r_{\max}$.
We set the speed interval to $[0.2, 10]$ metres per second,
and the pause time interval to $[0, 10]$ seconds (in the waypoint model).
A snapshot of the network was recorded
once every second over a simulation of 100,000 seconds.

%%% SUBSECTION
\subsection{Simulation Results}

\begin{figure}
\centering
\includegraphics[width=0.5\linewidth]{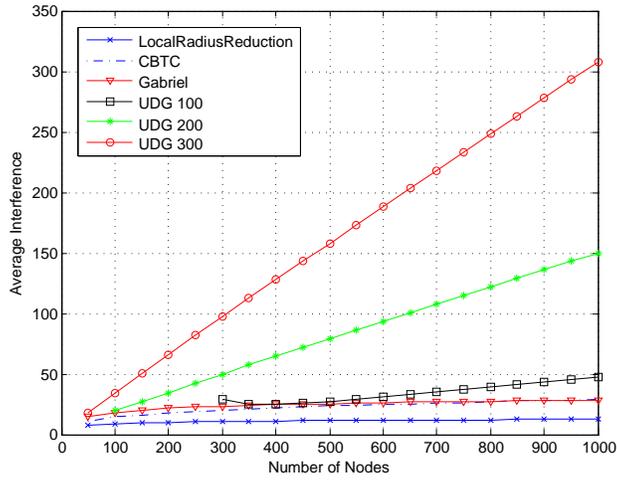}
\caption{Comparing the maximum interference of 
the {\sc LocalRadiusReduction} algorithm 
against other local topology control algorithms on a static network}
\label{fig:all_methods}
\end{figure}

\begin{figure}
\centering
\includegraphics[width=0.5\linewidth]{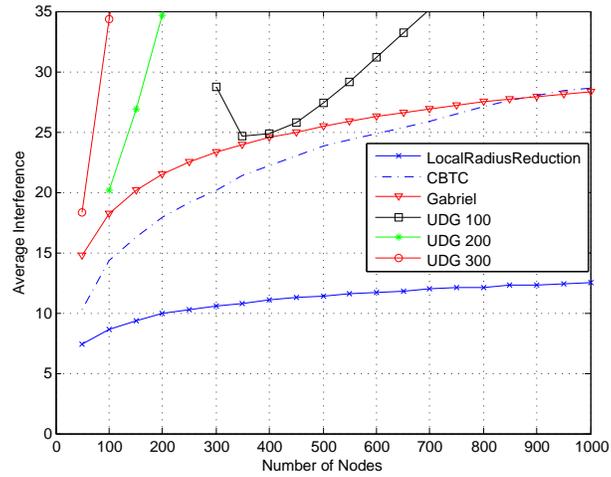}
\caption{Data from Figure~\ref{fig:all_methods} displayed with 
a bounded $y$-axis to emphasize relative differences}
\label{fig:all_method_zoom}
\end{figure}

We compared the average maximum interference of the topology constructed by
the algorithm {\sc LocalRadiusReduction} 
against the corresponding average maximum interference achieved respectively
by two local topology control algorithms: i) the local computation of 
the intersection of the Gabriel graph and the unit disc graph 
(with unit radius $r_{\max}$) \cite{boseWN2001},
and ii) the cone-based local topology control (CBTC) algorithm 
\cite{wattenhoferTON2005}.
In addition, we evaluated the maximum interference achieved when each node
uses a fixed radius of communication,
i.e., the communication graph is a unit disc graph of radius $r_{\max}$
(100, 200, or 300 metres, respectively).
These results are displayed in Figures~\ref{fig:all_methods}
and~\ref{fig:all_method_zoom}.

\begin{figure}[h]
\centering
\includegraphics[width=0.5\linewidth]{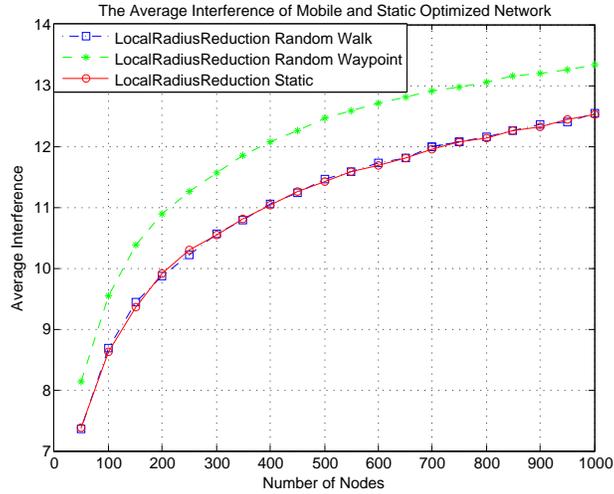}
\caption{Comparing the maximum interference of the 
{\sc LocalRadiusReduction} algorithm on static 
and mobile networks using both the random walk 
and random waypoint mobility models}
\label{fig:static_and_mobile}
\end{figure}

\begin{figure}[h]
\centering
\includegraphics[width=0.5\linewidth]{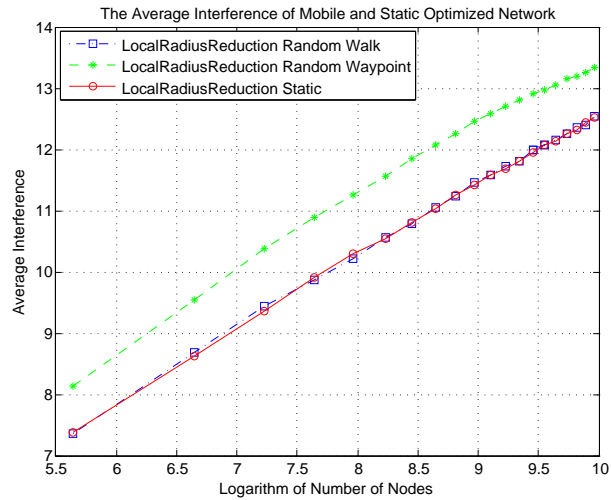}
\caption{Data from Figure~\ref{fig:static_and_mobile} displayed using
a logarithmic scale on the $x$-axis}
\label{fig:static_and_mobile_log}
\end{figure}

As shown, the average maximum inteference of the unit disc graph topologies
increases linearly with $n$.
Many of the unit disc graphs generated were disconnected
when the transmission radius was set to 100 metres for small $n$.
Since we require connectivity, we only considered values of $n$ and $r_{\max}$ 
for which at least half of the networks generated were connected.
When $r_{\max}=$ 100 metres, a higher average maximum interference 
was measured at $n=300$ than at $n=400$.
This is because many networks generated for $n=300$ were discarded due 
to being disconnected.
Consequently, the density of networks simulated for $n=300$ 
was higher than the average density of a random network with $n=300$ nodes, 
resulting in higher maximum interference.

Although both the local Gabriel and CBTC algorithms performed significantly
better than the unit disc graphs, 
the lowest average maximum interference was achieved 
by the {\sc LocalRadiusReduction} algorithm,
which is clearly seen to be logarithmic in $n$
in Figures~\ref{fig:static_and_mobile}
and \ref{fig:static_and_mobile_log}.
Note that the {\sc LocalRadiusReduction} algorithm reduces the maximum 
interference to $O(\log n)$ with high probability, 
irrespective of the initial maximum transmission radius $r_{\max}$.

Figures~\ref{fig:static_and_mobile}
and \ref{fig:static_and_mobile_log}
display the average maximum interference 
achieved by {\sc LocalRadiusReduction} 
on mobile networks, plotting simulation results for both the
random walk and random waypoint models, 
along with the corresponding results on a static network.
Simulation results obtained using the random walk model 
closely match those obtained on a static network
because the distribution of nodes
at any time during a random walk is nearly uniform \cite{dassarma2009}.
The average maximum interference increases slightly but remains logarithmic
when the random waypoint model is used.
The spatial distribution of nodes moving according to a random waypoint model 
is not uniform, and is maximized at the centre of the simulation region 
\cite{hyytia2006}.
Consequently, the density of nodes is high near the centre,
resulting in greater interference at these nodes.

Finally, we evaluated the algorithm {\sc LocalRadiusReduction} 
using actual mobility trace data of 
Piorkowski et al.\ \cite{epfl-mobility-2009-02-24},
consisting of GPS coordinates for trajectories of 537 taxi vehicles
recorded between May 17 and June 10, 2008, 
driving throughout the San Fransisco Bay area.
Each taxi's trace contains between 1000 and 20,000 sample points. 
We selected the 500 largest traces, each of which has over 8000 sample points.
To implement our algorithm, we selected $n$ taxis among the 500 uniformly
at random, ranging from $n = 50$ to $n=500$ in increments of 50.
As seen in Figure~\ref{fig:real_gps}, the resulting average maximum
interference is similar to that measured in our simulation results.

\begin{figure}[h]
\centering
\includegraphics[width=0.5\linewidth]{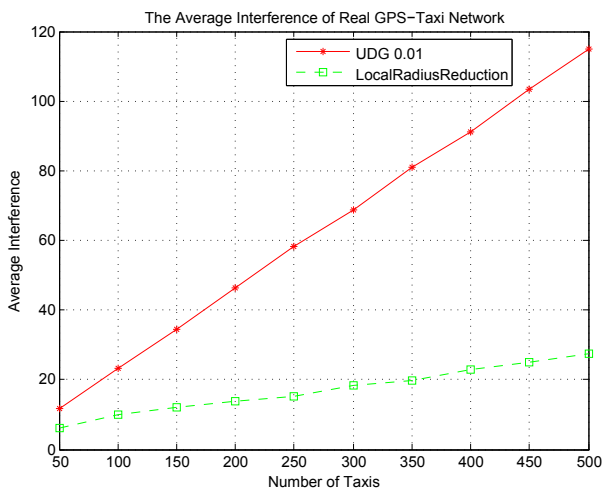}
\caption{Comparing the maximum interference of the 
{\sc LocalRadiusReduction} against a unit disc graph 
on actual mobile data}
\label{fig:real_gps}
\end{figure}

%%% SECTION - DISCUSSION
%\section{Discussion}\label{sec:conclusion}
%\input{discussion}

%%% SUBSECTION
\subsection*{Acknowledgements}
Stephane Durocher would like to thank Csaba T\'oth for insightful discussions
related to the interference minimization problem in one dimension.

%\newpage

%%% BIBLIOGRAPHY
%\bibliographystyle{plain}
%\bibliography{interferenceMin}

\end{document}